\documentclass[letterpaper, 10 pt, journal, twoside]{ieeetran} 
\usepackage{mathtools}
\usepackage{psfrag}
\usepackage{amsfonts}
\usepackage{amssymb}
\usepackage{balance}
\usepackage{amsthm}
\usepackage{amsmath}
\usepackage{mathrsfs} 
\usepackage{enumerate}
\usepackage{psfrag,epsfig,graphicx,ae}
\usepackage[colorlinks=true, urlcolor=blue, citecolor=blue, linkcolor=blue]{hyperref}
 \usepackage[usenames,dvipsnames]{pstricks}
  \usepackage{pstricks-add}
 \usepackage{epsfig}
 \usepackage{pst-grad} % For gradients
 \usepackage{pst-plot} % For axes
 \usepackage[space]{grffile} % For spaces in paths
 \usepackage{etoolbox} % For spaces in paths
 \makeatletter % For spaces in paths
 \patchcmd\Gread@eps{\@inputcheck#1 }{\@inputcheck"#1"\relax}{}{}
 \makeatother
\usepackage{color}
\usepackage{subcaption}
\usepackage{url}
\usepackage{tikz}
\usetikzlibrary{patterns}
\usepackage{float}             % Per usare [H] nel posizionamento delle fig.
%\usepackage{subfigure}         % Per mettere due figure affiancate

%===============================================================================

\newcounter{teocount}
\newcounter{propcount}

\newcounter{remcount}
\newcounter{defcount}

\newtheorem{remm}[remcount]{Remark}

\newtheorem{definition}[defcount]{Definition}
\newtheorem{proposition}[propcount]{Proposition}
\newtheorem{theorem}[teocount]{Theorem}

\newtheorem{exx}{Example}

\newtheorem{lemma}{Lemma}

\newenvironment{remark}{\begin{remm}\rm }{\hfill \hspace*{1pt} \hfill $\diamond$\end{remm}}

\newenvironment{example}{\begin{exx}\rm }{\hfill \hspace*{1pt} \hfill $\lrcorner$\end{exx}}

\DeclareMathOperator{\diag}{diag}
\DeclareMathOperator{\dom}{dom}
\DeclareMathOperator{\rg}{rge}
\DeclareMathOperator{\He}{He}

\newcommand{\np}{{n_p}}
\newcommand{\nuu}{{n_u}}
\newcommand{\1}{{\mathbf{1}}}

\newcommand{\R}{{\mathbb R}}
\newcommand{\Dy}{{\mathbb D}}
\renewcommand{\S}{{\mathbb S}}
\newcommand{\nats}{{\mathbb N}}

\DeclareMathOperator*{\argmin}{arg\,min}

\makeatletter
\newcommand*{\tr}{%
  {\mathpalette\@tr{}}%
}
\newcommand*{\@tr}[2]{%
  % #1: math style
  % #2: unused
  \raisebox{\depth}{$\m@th#1\intercal$}%
}
\makeatother

\newcommand*{\QEDB}{\null\nobreak\hfill\ensuremath{\square}}
\bibliographystyle{IEEEtran}
\newcommand{\source}{{This is a repository version of our paper. Please cite the published version DOI: \href{https://doi.org/10.1109/LCSYS.2022.3183937}{https://doi.org/10.1109/LCSYS.2022.3183937}}}
 \usepackage{fancyhdr}
\pagestyle{fancy}
\fancyhf{}

\chead{\source}
\rhead{\thepage}
\makeatletter

\def\ps@IEEEtitlepagestyle{}

\title{\LARGE \bf  Stability Analysis of a Class of Discontinuous Discrete-Time Systems}
\author{Francesco Ferrante,~\IEEEmembership{Senior Member,~IEEE, } and Giorgio Valmorbida
\thanks{This research is funded in part by ANR via project HANDY, number ANR-18-CE40-0010.}
\thanks{Francesco Ferrante is with Department of Engineering, University of Perugia, Perugia, Italy. {francesco.ferrante@unipg.it}.}
\thanks{G. Valmorbida is with the Laboratoire des Signaux et Systèmes, CentraleSupélec, CNRS, Université Paris-Saclay, Gif-sur-Yvette 91192, France. He is also with Inria projet DISCO. {giorgio.valmorbida@l2s.centralesupelec.fr.}}
\thanks{The authors contributed equally to the work.}
}
\thispagestyle{empty} 
\begin{document}
\maketitle
%%%%%%%%%%%%%%%%%%%%%%%%%%%%%%%%%%%%%%%%%%%%%%%%%%%%%%%%%%%%%%%%%%%%%%%%%%%%%%%%
\begin{abstract}
The stability analysis of a class of discontinuous discrete-time systems is studied in this paper. The system under study is modeled as a feedback interconnection of a linear system and a set-valued nonlinearity. An equivalent representation, based on a constrained optimization problem, is proposed to represent the set-valued nonlinearity via a collection of linear and quadratic constraints. Relying on this description and on the use of a generalized quadratic set-valued Lyapunov functions, sufficient conditions in the form of linear matrix inequalities for global exponential stability are obtained. Numerical examples corroborate the theoretical findings.
\end{abstract}
\begin{IEEEkeywords}
Nonlinear systems, Lyapunov stability, LMIs.
\end{IEEEkeywords}
%%%%%%%%%%%%%%%%%%%%%%%%%%%%%%%%%%%%%%%%%%%%%%%%%%%%%%%%%%%%%%%%%%%%%%%%%%%%%%%%
\section{Introduction}
\subsection{Motivation and background}
\IEEEPARstart{T}{he} widespread availability and the decreasing costs of digital devices have promoted the implementation of control systems on digital platforms. However, embedded control systems when implemented on affordable devices also raise theoretical challenges in term of stability analysis and performance.  Indeed, the presence of nonlinear elements in feedback control systems may lead to limit cycles, chaotic behaviors, which may induce poor performance and instability. A fundamental limitation in digital control systems consists of the use of finite alphabets to represent information such as inputs and outputs. The control of systems based on the use of finite alphabets has been largely investigated in the literature over the last years. A finite number of input values appears in quantized actuators followed by a saturation nonlinearity~\cite{fu2005sector,di2020practical}. For example, in \cite{cortes2006finite} the use of ternary controllers for multi agent systems consensus is proposed. Stabilization of nonlinear systems by a finite number of control or measurement values is studied in \cite{de2009robust}. Formation control under the assumption of binary information exchanges has been pursued in \cite{jafarian2015formation}. Distributed consensus via binary control has been investigated in \cite{chen2011finite}. In \cite{yu2011rendezvous}, ternary feedback controllers are shown to be effective to tackle rendez-vous problems for Dubins models of cars. Another application of ternary controllers for integrator coordination is featured in \cite{de2013robust}.  A predominant use of controllers taking values into finite alphabets arises in the literature of symbolic control in which control design is performed based on discrete abstractions; see, e.g.,  
 \cite{tabuada2009verification, SIG19}.
\subsection{Contributions and Outline of the Paper}
In this paper, we focus on stability analysis of a class of discontinuous discrete-time control systems. In particular, we consider a scenario in which a linear plant is controlled via an affine static state feedback law taking values into the set $\mathcal{Q}\coloneqq \{0, \delta_1\}\times \{0, \delta_2\}\times\dots\{0, \delta_{\nuu}\}$, where $\nuu\in\nats$ is the number of control inputs and $\delta_i\in\R$, for all $i\in\{1, 2,\dots, \nuu\}$, are some given \emph{levels}. More specifically, we focus on the following class of nonlinear discrete-time systems:
\begin{equation}
\label{eq:ternaryclosedloop_single}
x^+=Ax+B\Delta S(Kx+d)
\end{equation}
where $A\in\R^{n_p\times n_p}$, $B\in\R^{n_p\times n_u}$, $K\in\R^{n_u\times n_p}$, 
 $\Delta\coloneqq\diag\{\delta_1, \delta_2,\dots, \delta_{\nuu}\}$, $d\in\R^{\nuu}$ are given and
 $S\colon\R^{n_u}\rightarrow\R^{n_u}$ is defined as follows:
 \begin{equation}
 \label{eq:Qsingle}
S(u)\coloneqq (s(u_1), s(u_2), \dots, s(u_{\nuu}))
 \end{equation}
where for all $v\in\R$ 
 \begin{equation}
  \label{eq:step}
s(v)\coloneqq\begin{cases}
1&\text{if}\quad v>0\\
0&\text{if}\quad v\leq 0.
\end{cases}
 \end{equation}

 The above setup is rather general and allows one to capture, among others, the typical situation in which actuators may only deliver a finite set of input values.  Since system \eqref{eq:ternaryclosedloop_single} is assumed to be given, the goal of the paper is to provide a method for the stability analysis of the origin of~\eqref{eq:ternaryclosedloop_single}. The expression for the input mapping $s$ in~\eqref{eq:step} is a static nonlinearity, which is commonly studied by a sector description. In contrast with more classical absolute stability approaches, \emph{we do not rely on any sector bound approach}. Moreover, we introduce a class of set-valued piecewise quadratic Lyapunov functions (\emph{LF}), as opposed to the standard quadratic LF approaches. The structure of the paper and its contributions can be summarized as follows:
\begin{itemize}
\item Following the general approach in \cite{primbs2001kuhn}, in Section~\ref{sec:probStat} we propose an equivalent representation of a regularized version of the quantizer mapping \eqref{eq:Qsingle} based on the use of Karush-Kuhn-Tucker (\emph{KKT}) necessary conditions for optimality.
\item Inspired by \cite{dai2009piecewise,gonzaga2012stability,doi:10.1137/050629185,8392418}, in Section~\ref{sec:StabAnalysis} we introduce a suitable class of \emph{generalized quadratic} Lyapunov functions.
 
\item Relying on the proposed class of set-valued generalized-quadratic Lyapunov functions, Section~\ref{sec:StabAnalysis} ends by providing sufficient conditions in the form of linear matrix inequalities to certify global exponential stability of the origin of~\eqref{eq:ternaryclosedloop_single}. Those conditions can be efficiently checked by using semidefinite programming.
\item Section~\ref{sec:NumEx} illustrates the effectiveness of the proposed methodology in two numerical examples. 
\end{itemize}   
The main extension with respect to our  conference paper~\cite{valmorbida2020quantization} is the analysis of a set-valued regularized version of the discontinuous dynamics in \eqref{eq:ternaryclosedloop_single}. Such an extension naturally leads to the use of set-valued Lyapunov functions, which requires a proper handling; this is not pursued in \cite{valmorbida2020quantization}. 
\subsection{Notation}
\label{sec:Notation}
The symbols $\nats$ an $\R$ denote, respectively, the set of positive integers and the set of reals, $\mathbb{N}_{0}$ represents the set of nonnegative integers, $\R^n$ is the $n$-dimensional Euclidean space, and $\R^{n\times m}$ is set of the $n\times m$ real matrices. The symbol $\S^n$ stands for the set of $n\times n$ symmetric matrices, $\Dy^{n}$ denotes the set of $n\times n$ diagonal matrices, and $\mathbb{P}^{n}$ is the set of $n\times n$ symmetric matrices with nonnegative entries. For a vector $x\in\R^n$, $\vert x \vert$ denotes its Euclidean norm. 
The identity matrix is denoted by $I$. The symbol $\1_{n}$ is the all-ones vector of $\R^n$.
For a matrix $A\in\R^{n\times m}$, $A^\tr$ denotes the transpose of $A$,  and, when $n=m$, $\He (A)=A+A^\tr$. We use the equivalent notation for vectors $(x, y)=[x^\tr\,\,y^\tr]^\tr$. The symbol $\odot$ stands for the Hadamard product. Let $A\in\mathbb{S}^n$, $A\preceq 0$ stands for negative semidefiniteness of $A$. Given $x\in\mathbb{R}^n$, $x\leq 0$ indicates that the components of  $x$ are nonpositive. The symbol $\bullet$ stands for symmetric blocks in symmetric partitioned matrices. 
Given a matrix $M$ with $\ker M\neq \{0\}$, $M_\perp$ stands for any matrix having as columns a basis of $\ker M$. The symbol $\sup$ stands for the supremum, $\overline{\Omega}$ is the closure of the set $\Omega$, and $\rg f$ is the image of the function $f$.  The symbol $\bigoplus_{i=1}^n A_i$ stands for direct sum of matrices $A_1, A_2, \dots, A_n$ and $A \otimes B$ indicates the Kronecker product of matrices $A$ and $B$.
%%%%%%%%%%%%%%%%%%%%%%%%%%%%%%%%%%%%%%%%%%%%%%%%%%%%%%%%%%%%%%%%%%%%%%%%%%%%%%%%
\section{Problem setting and Key results}
\label{sec:probStat}
\subsection{Modeling and structural properties}
\label{sec:setup}
Due to the discontinuity of $S$ at zero, \eqref{eq:ternaryclosedloop_single} is a discontinuous dynamical system. Although discontinuities in discrete-time dynamical systems do not lead to major technical problems as in their continuous-time counterpart (see, e.g., \cite{Cortes:2008aa,cer:dep:fra/automatica2011,ferrante2019sensor,ferrante2015stabilization}), they generally lead to lack of robustness, with stability properties being fragile in the presence of vanishing perturbations; see \cite[Example 4.4, page 76]{goebel2012hybrid}. To overcome this drawback, in this work we consider the following set-valued \emph{regularization} of \eqref{eq:ternaryclosedloop_single}:
%%%%%%SETVALUED%%%%%%%%%%%%%%%%%%%%%
\begin{equation}
\label{eq:ternaryclosedloop}
x^+\in Ax+B\Delta\mathbf{S}(Kx+d)
\end{equation}
where the set-valued mapping\footnote{The double arrow notation $\rightrightarrows$ is used to distinguish set-valued maps from functions.} 
 $\mathbf{S}\colon\R^{n_u}\rightrightarrows\R^{n_u}$ is defined as follows:
 \begin{equation}
 \label{eq:fraQ}
 \mathbf{S}(u)\coloneqq (\mathbf{s}(u_1),\mathbf{s}(u_2), \dots, \mathbf{s}(u_{\nuu}))
 \end{equation}
with, for all $v\in\R$, 
\begin{equation}
\mathbf{s}(v)\coloneqq\begin{cases}
1&\text{if}\quad v>0\\
0&\text{if}\quad v<0\\
[0, 1]&\text{if}\quad v=0.
\end{cases}
\label{eq:stepSet}
\end{equation}
Observe that solutions to \eqref{eq:ternaryclosedloop_single}  are solutions to \eqref{eq:ternaryclosedloop}. Thus, stability properties of \eqref{eq:ternaryclosedloop} carry over \eqref{eq:ternaryclosedloop_single}. We discuss properties of solutions and provide stability definitions to difference inclusions in Section~\ref{sec:prel}. 
\begin{remark}
It can be shown that $\mathbf{S}$ contains the so-called (discrete-time) Krasovskii regularization of the step function $S$; see, e.g., \cite[Definition 4.13]{goebel2012hybrid}. Therefore, \eqref{eq:ternaryclosedloop} captures all possible solutions to \eqref{eq:ternaryclosedloop_single} obtained by introducing vanishing state perturbations, i.e., Hermes solutions; see \cite[Chapter 4]{goebel2012hybrid}. This ensures that stability of the origin of \eqref{eq:ternaryclosedloop_single} is robust with respect to vanishing perturbations. 
\end{remark}
\subsection{Characterization of the mapping $\mathbf{S}$ via quadratic constraints}
\label{sec:KKT}
In this subsection we illustrate the key result of this paper. This result yields a tight characterization of the mapping $\mathbf{S}$ in \eqref{eq:fraQ} in terms of quadratic constraints. To achieve this goal, we pursue a similar approach as in \cite{primbs2001kuhn} and rely on optimization-based representation of the mapping $\mathbf{S}$ along with Karush-Kuhn-Tucker (KKT) optimality conditions. 
To this end, observe that for all $v\in\R$, one can express \eqref{eq:stepSet} as
\begin{equation}
\label{eq:OptiStep}
\begin{array}{rl}
\mathbf{s}(v)\in\underset{w\in[0,1]}{\argmin}& -vw.
\end{array}
\end{equation}
Clearly, if $v<0$, one has $\mathbf{s}(v)=0$, if $v>0$, one has $\mathbf{s}(v)=1$, while when $v=0$, $\mathbf{s}(v)\in [0,1]$, which is consistent with \eqref{eq:stepSet}. Building upon this observation, one can obtain a characterization of the mapping $\mathbf{S}$ via the application of Karush-Kuhn-Tucker (KKT) optimality conditions to problem \eqref{eq:OptiStep}. This is formally stated in the result given next. 
\begin{theorem}
\label{thm:stepKKT}
Let $\mathbf{S}$ be defined as in \eqref{eq:fraQ}, $u\in\R^{n_u}$, and $s\in \R^{n_u}$. Then, the following items are equivalent:
\begin{itemize}
\item[$(i)$] $s\in \mathbf{S}(u)$
\item[$(ii)$] there exist $\lambda_1, \lambda_2\in\R^{n_u}$ such that:
\begin{subequations}
\label{eq:KKT_F} 
\begin{align}
\label{eq:KKT_Fa} 
- u - \lambda_1+  \lambda_2 &  =   0\\
\label{eq:KKT_Fb} 
\lambda_1  \odot s&  =   0 \\
\label{eq:KKT_Fc} 
\lambda_2\odot(\1_{\nuu}-s)&  =   0 \\
\label{eq:KKT_Fd} 
-\lambda_1&   \leq    0 \\
\label{eq:KKT_Fe} 
-\lambda_2&   \leq    0\\
\label{eq:KKT_Ff} 
-s&   \leq  0 \\
\label{eq:KKT_Fg} 
-\1_{\nuu}+s&  \leq  0
\end{align}
\end{subequations}
\end{itemize}
\end{theorem}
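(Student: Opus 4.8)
The plan is to recognize that condition $(ii)$ states exactly the Karush--Kuhn--Tucker (KKT) conditions for the componentwise convex program appearing in \eqref{eq:OptiStep}, and then exploit the fact that for this class of programs the KKT conditions are both necessary and sufficient for optimality. First I would note that by \eqref{eq:fraQ} the problem decouples across the $\nuu$ components: $s\in\mathbf{S}(u)$ holds if and only if $s_i\in\mathbf{s}(u_i)=\argmin_{w\in[0,1]}-u_i w$ for every $i$. Setting $v\coloneqq u_i$ and writing the box constraint $w\in[0,1]$ as the pair of affine inequalities $-w\le 0$ and $w-1\le 0$, the scalar Lagrangian is $-v w+\lambda_1(-w)+\lambda_2(w-1)$, and I would check that \eqref{eq:KKT_Fa} is stationarity, \eqref{eq:KKT_Fb}--\eqref{eq:KKT_Fc} are complementary slackness, \eqref{eq:KKT_Fd}--\eqref{eq:KKT_Fe} are dual feasibility, and \eqref{eq:KKT_Ff}--\eqref{eq:KKT_Fg} are primal feasibility. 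Thus the vector system \eqref{eq:KKT_F} holds if and only if the scalar KKT conditions hold in each component, and it suffices to prove the scalar equivalence.

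Conceptually, the two directions rest on standard facts. For $(i)\Rightarrow(ii)$, since the feasible set is described by affine inequalities a constraint qualification holds automatically, so optimality of $s_i$ guarantees the existence of multipliers satisfying the scalar KKT system. For $(ii)\Rightarrow(i)$, because the objective $-v w$ is linear (hence convex) over the convex set $[0,1]$, any KKT point is a global minimizer, so the multipliers certify $s_i\in\mathbf{s}(u_i)$. Stacking the components and rewriting the per-component products as the Hadamard products in \eqref{eq:KKT_Fb}--\eqref{eq:KKT_Fc} then yields the full equivalence.

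In practice, and more transparently, I would carry out the scalar equivalence by a direct case analysis on the sign of $v$. For $(i)\Rightarrow(ii)$: if $v>0$ then $s=1$, and $\lambda_1=0$, $\lambda_2=v>0$ satisfy every condition; if $v<0$ then $s=0$, and $\lambda_2=0$, $\lambda_1=-v>0$ work; if $v=0$ then $s\in[0,1]$ and $\lambda_1=\lambda_2=0$ suffice. For $(ii)\Rightarrow(i)$: stationarity \eqref{eq:KKT_Fa} gives $v=\lambda_2-\lambda_1$ with $\lambda_1,\lambda_2\ge 0$, so $v>0$ forces $\lambda_2>0$ and hence $s=1$ by \eqref{eq:KKT_Fc}; $v<0$ forces $\lambda_1>0$ and hence $s=0$ by \eqref{eq:KKT_Fb}; while $v=0$ leaves any $s\in[0,1]$ admissible. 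In each case the resulting $s$ agrees with \eqref{eq:stepSet}.

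The only genuine subtlety, and the step I would treat most carefully, is the degenerate case $v=0$, where the minimizer set is the entire interval $[0,1]$: I must confirm that admissible multipliers exist for \emph{every} $s\in[0,1]$, not merely for interior points. This is precisely where complementary slackness \eqref{eq:KKT_Fb}--\eqref{eq:KKT_Fc} together with stationarity pin down $\lambda_1=\lambda_2=0$, showing that the set-valued behavior of $\mathbf{s}$ at the discontinuity is faithfully encoded by the quadratic constraints. The remaining work — reassembling the scalar results into the vector identities involving $\1_{\nuu}$ and $\odot$ — is routine bookkeeping.
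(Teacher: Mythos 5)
Your proposal is correct, and its backbone is the same as the paper's: reduce to the scalar case ($\nuu=1$) by componentwise decoupling, recognize \eqref{eq:KKT_F} as the KKT system of the linear program \eqref{eq:OptiStep}, and use that for such a program the KKT conditions are both necessary (the constraint qualification is automatic for affine constraints) and sufficient (linear objective over a convex feasible set) for optimality. The paper's proof stops at exactly that level, invoking the two facts as black boxes. Where you go further is the explicit case analysis on the sign of $v$: exhibiting $(\lambda_1,\lambda_2)=(0,v)$ for $v>0$, $(-v,0)$ for $v<0$, and $(0,0)$ for $v=0$, and conversely reading $s$ off from stationarity plus complementary slackness. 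This buys two things. First, the scalar equivalence becomes self-contained, with no appeal to KKT theory needed. Second, your construction of the multipliers is precisely the content of the paper's subsequent Lemma~\ref{lemm:ExplLag}, which states that one may always take $\lambda_1(u)=r(-u)$ and $\lambda_2(u)=r(u)$ with $r$ the ramp function; your case analysis proves that lemma en route. Your flagged subtlety --- that at $v=0$ the zero multipliers must work for \emph{every} $s\in[0,1]$, including the endpoints where one complementary-slackness condition is inactive --- is the right point to be careful about and is handled correctly (at $s=0$ or $s=1$ the remaining multiplier is killed by stationarity).
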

\begin{proof}
Since the relations in \eqref{eq:KKT_F}  are defined elementwise, the claim can be proven for each element. Thus, we assume $n_u =1$, in which case we have $\bf{s} = \bf{S}$.

\noindent\underline{\textit{Proof of $(i)\implies (ii)$}}. Using \eqref{eq:OptiStep}, it follows that
\begin{equation}
\label{eq:OptiS}
s\in\argmin_{w\in[0,1]} -uw
\end{equation}
In particular, the Lagrangian associated to \eqref{eq:OptiS} writes:
$$\mathcal{L}_u(w,\lambda) = -uw+\left[\begin{array}{c}\lambda_1\\\lambda_2\end{array}\right]^{\top}  \left( \left[\begin{array}{c}-1\\1\end{array}\right]w+\left[\begin{array}{c}0\\-1\end{array}\right] \right).$$
To conclude, let us recall that from KKT necessary conditions for optimality (\eqref{eq:OptiS} is a linear program), one has that for any optimal solution $w^\star$ to \eqref{eq:OptiS}, there exists a unique $\lambda^\star\coloneqq(\lambda_1^\star, \lambda_2^\star)$ such that:
\begin{equation}
\label{eq:KKTns}
\begin{aligned}
&\frac{d}{dw}\mathcal{L}_u(w^\star, \lambda^\star)=0, \lambda^\star_1w^\star=0, \lambda^\star_2(1-w^\star)=0\\
&\lambda_1^\star\geq 0, \lambda_2^\star\geq 0, w^\star\geq 0, w^\star\leq 1\\
\end{aligned}
\end{equation} 
which reads as \eqref{eq:KKT_F}. Hence, recalling that $s$ is an optimal solution to \eqref{eq:OptiS}, i.e., $w^\star=s$ satisfies \eqref{eq:KKTns}, the implication is established. 

\noindent\underline{\textit{Proof of $(ii)\implies (i)$}}. This implication can be readily shown by observing that since \eqref{eq:OptiS} is a linear program, the satisfaction of \eqref{eq:KKT_F} (KKT conditions) implies \eqref{eq:OptiS}. This establishes the result.
\end{proof}

Theorem~\ref{thm:stepKKT} shows that for all $u\in\R^{\nuu}$ and $s\in\mathbf{S}(u)$, there exist $\lambda_1, \lambda_2\in\R^{\nuu}$ such that $\chi\coloneqq(\lambda_1, \lambda_2, s, \1_{\nuu}-s, u)\in\R^{5\nuu}$ satisfies \eqref{eq:KKT_F}. In particular, the entries of the vector $\chi$ depend on $u$ and $s$. Therefore, in the remainder of the paper, given $u\in\R^{\nuu}$ and $s\in\mathbf{S}(u)$, we use the shorthand notation $\chi(u, s)$ to denote the corresponding vector satisfying \eqref{eq:KKT_F}. For compactness, next we rewrite the linear equality constraints in \eqref{eq:KKT_Fa} as follows:
\begin{subequations}
\begin{equation}
L\chi=0
\label{eq:KKTCompact}
\end{equation}
where:
\begin{equation}
\begin{aligned}
&L\coloneqq \begin{bmatrix}
-1&1&0&0&-1
\end{bmatrix}\otimes I_{\nuu}.
\end{aligned}
\label{eq:ConstraintsMatrices}
\end{equation}
\end{subequations}

The result given next provides an explicit characterization of the multipliers $\lambda_1$ and $\lambda_2$ introduced in Theorem~\ref{thm:stepKKT}. This characterization enables to make the construction of the Lyapunov in Section~\ref{sec:StabAnalysis} explicit.
\begin{lemma}
\label{lemm:ExplLag}
Let $u\in\R^{\nuu}$. Then, for all $s\in\mathbf{S}(u)$, there exist $\lambda_1(u)$ and $\lambda_2(u)$ such that  $\chi(u, s)=(\lambda_1(u), \lambda_2(u), s, \1_{\nuu}-s, u)$ satisfies \eqref{eq:KKT_F}. In particular
\begin{equation}
\label{eq:lemm:ramp}
\begin{aligned}
\lambda_1(u)=r(-u),\quad&\lambda_2(u)=r(u),
\end{aligned}
\end{equation}
where $u\mapsto r(u)$ is the componentwise ramp function, namely for all $i=1,2,\dots, \nuu$, $r(u_i)=u_i$ if $u_i\geq 0$ and $r(u_i)=0$ otherwise. \QEDB
\end{lemma}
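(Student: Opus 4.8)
The plan is to verify by direct substitution that the explicit pair in \eqref{eq:lemm:ramp} satisfies every relation of \eqref{eq:KKT_F}; Theorem~\ref{thm:stepKKT} already guarantees that admissible multipliers exist, so the only new content is that they are given in closed form by the ramp function and depend on $u$ alone. As in the proof of Theorem~\ref{thm:stepKKT}, all relations in \eqref{eq:KKT_F} are imposed componentwise and $r$ acts componentwise, so it suffices to treat a single component; I therefore fix $\nuu=1$, take $u,s\in\R$ with $s\in\mathbf{s}(u)$, and set $\lambda_1=r(-u)$, $\lambda_2=r(u)$.

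First I would dispatch the inequality constraints \eqref{eq:KKT_Fd}--\eqref{eq:KKT_Fg}. The nonnegativity of the multipliers required by \eqref{eq:KKT_Fd}--\eqref{eq:KKT_Fe} is immediate, since the ramp function is nonnegative and hence $r(-u)\geq 0$ and $r(u)\geq 0$; the bounds $0\leq s\leq 1$ demanded by \eqref{eq:KKT_Ff}--\eqref{eq:KKT_Fg} hold because $\mathbf{s}(u)\subseteq[0,1]$ for every $u$ by \eqref{eq:stepSet}. For the stationarity equation \eqref{eq:KKT_Fa} I would invoke the elementary identity $r(u)-r(-u)=u$, valid for all $u\in\R$, which yields $-u-\lambda_1+\lambda_2=-u-r(-u)+r(u)=0$.

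The only substantive step is complementary slackness, \eqref{eq:KKT_Fb}--\eqref{eq:KKT_Fc}, and here the plan is to split on the sign of $u$ and read off the admissible values of $s$ from \eqref{eq:stepSet}. If $u>0$ then $r(-u)=0$, which settles \eqref{eq:KKT_Fb}, while \eqref{eq:stepSet} forces $s=1$ and hence $\lambda_2(1-s)=0$, settling \eqref{eq:KKT_Fc}. The case $u<0$ is symmetric: $r(u)=0$ settles \eqref{eq:KKT_Fc}, and \eqref{eq:stepSet} gives $s=0$, so $\lambda_1 s=0$ settles \eqref{eq:KKT_Fb}. When $u=0$ one has $r(u)=r(-u)=0$, so both products vanish for every admissible $s\in[0,1]$. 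Collecting the three cases verifies all seven relations, which proves the claim.

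The hard part, such as it is, lies precisely in these two complementary-slackness relations, since they are the only conditions coupling the multipliers to $s$. Everything works because whenever a multiplier is nonzero (equivalently $u\neq 0$) the value of $s$ is pinned down uniquely by \eqref{eq:stepSet} to be exactly the value that annihilates the corresponding complementary term; this is also the reason the multipliers turn out to be functions of $u$ only, in agreement with the uniqueness of the KKT multipliers noted in the proof of Theorem~\ref{thm:stepKKT}.
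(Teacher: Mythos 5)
Your proof is correct; it runs the paper's argument in the opposite direction, and the difference is worth noting. The paper fixes $(u,s)$ with $s\in\mathbf{S}(u)$ and \emph{solves} the system \eqref{eq:KKT_F} in the unknowns $(\lambda_1,\lambda_2)$: splitting on $u=0$ versus $u\neq 0$, and using the definition of $\mathbf{S}$ to tie the sign of $u$ to the value of $s$, it concludes that any admissible pair of multipliers must equal $(r(-u),r(u))$ --- a necessity argument, which implicitly leans on Theorem~\ref{thm:stepKKT} for the existence of at least one solution. You instead take the closed-form candidate $(r(-u),r(u))$ and \emph{verify} by substitution that it satisfies all seven relations, with the same case split on the sign of $u$ carrying the complementary-slackness conditions \eqref{eq:KKT_Fb}--\eqref{eq:KKT_Fc}, and the identity $r(u)-r(-u)=u$ disposing of stationarity \eqref{eq:KKT_Fa}. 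Your route is logically self-contained --- it needs nothing from Theorem~\ref{thm:stepKKT}, only \eqref{eq:stepSet} and elementary properties of the ramp --- and it proves exactly the existence claim the lemma makes; the paper's route proves slightly more, namely that the multipliers are \emph{uniquely} determined by $(u,s)$, which is what justifies writing them as functions of $u$ alone (a fact you assert via Theorem~\ref{thm:stepKKT} but do not re-derive, which is acceptable since the lemma does not claim uniqueness). Both proofs reduce to the scalar case for the same reason, so in substance the computations coincide; yours buys self-containedness, the paper's buys uniqueness.
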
 
\begin{proof}
For the sake of the exposition, we develop the proof for $\nuu=1$. To prove the result, we analyze the solutions to system \eqref{eq:KKT_F}  in the unknowns $(\lambda_1, \lambda_2)$ for fixed values of $u$ and $s$.  In particular, the following can be proven via simple manipulations. If $u=0$, then $s\in[0, 1]$ and from \eqref{eq:KKT_Fb}, \eqref{eq:KKT_Fc}, $\lambda_1=\lambda_2=0$. If $u\neq 0$. Then, $s\in\{0, 1\}$ and from \eqref{eq:KKT_Fa}, $(\lambda_1, \lambda_2)=(-u, 0)$ if $s=0$ or $(\lambda_1, \lambda_2)=(0, u)$ otherwise. 
%\begin{itemize}
%\item[$(i)$] If $u=0$, then $s\in[0, 1]$ and from %\eqref{eq:KKT_Fb}, \eqref{eq:KKT_Fc}, %$\lambda_1=\lambda_2=0$.
%\item[$(ii)$] 
%$$
%\begin{aligned}
%&(\lambda_1, \lambda_2)=(-u, 0)&\,\,\text{if}\,s=0\\
%&(\lambda_1, \lambda_2)=(0, u)&\,\,\text{if}\,s=1
%\end{aligned}
%$$
%\end{itemize}
Hence, using the definition of the map $\mathbf{S}$, the two relationships above yield \eqref{eq:lemm:ramp}.
\end{proof}

In light of Lemma~\ref{lemm:ExplLag}, in the remainder of the paper, for all $u\in\R^{\nuu}$ we use the notation $\overline{\lambda}(u)=(r(-u), r(u))$.

\subsection{Preliminaries on difference inclusions}
\label{sec:prel}
We consider set-valued dynamics of the form:
\begin{equation}
\label{eq:DiffInc}
x^+\in G(x)
\end{equation}
where $x\in\R^n$ is the system state and $G\colon \R^n\rightrightarrows \R^n$ is a set-valued map. A solution to \eqref{eq:DiffInc} is any function $\phi\colon\dom\phi\rightarrow \R^n$ with $\dom\phi=\nats_0\cap\{0, 1,\dots, \overline{J}\}$ for some $\overline{J}\in\nats_0\cup\{\infty\}$ such that for all $j\in\dom\phi$, with $j+1\in\dom\phi$, $\phi(j+1)\in G(\phi(j))$.  We say that a solution $\phi$ is \emph{maximal} if it cannot be extended and it is \emph{complete} if $\sup\dom\phi=\infty$. Regarding system~\eqref{eq:ternaryclosedloop}, the following holds
\begin{proposition}
\label{prop:existence}
For any $\xi\in\R^{\np}$, there exists a  complete solution $\phi$ to \eqref{eq:ternaryclosedloop} such that $\phi(0)=\xi$.
 \end{proposition}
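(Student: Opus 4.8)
The plan is to build a complete solution directly by induction on the time index, exploiting the fact that the right-hand side of \eqref{eq:ternaryclosedloop} is never empty-valued. Define the set-valued map $G\colon\R^{\np}\rightrightarrows\R^{\np}$ by $G(x)\coloneqq Ax+B\Delta\mathbf{S}(Kx+d)$, so that \eqref{eq:ternaryclosedloop} reads $x^+\in G(x)$. The first step I would carry out is to observe that $G$ has nonempty values everywhere: from \eqref{eq:stepSet} each scalar image $\mathbf{s}(v)$ is one of $\{0\}$, $\{1\}$, or $[0,1]$, hence nonempty for every $v\in\R$; therefore the Cartesian product in \eqref{eq:fraQ} gives $\mathbf{S}(u)\neq\emptyset$ for every $u\in\R^{\nuu}$, and consequently $G(x)\neq\emptyset$ for every $x\in\R^{\np}$.

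Second, I would construct the solution recursively. Set $\phi(0)=\xi$. Given $\phi(j)$ for some $j\in\nats_0$, the set $\mathbf{S}(K\phi(j)+d)$ is nonempty, so I may select any $s_j\in\mathbf{S}(K\phi(j)+d)$ and define $\phi(j+1)\coloneqq A\phi(j)+B\Delta s_j$. By construction $\phi(j+1)\in G(\phi(j))$, so the defining relation of a solution to \eqref{eq:ternaryclosedloop} holds at every step. Since the recursion can be performed for every $j$, the resulting $\phi$ is defined on all of $\nats_0$, whence $\sup\dom\phi=\infty$ and $\phi$ is a complete solution with $\phi(0)=\xi$.

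The conceptual point worth stressing — and the only place where any care is needed — is why nonemptiness of $G$ suffices for completeness, in contrast with the continuous-time setting. In discrete time there is no finite-escape phenomenon: a maximal solution can fail to be complete only if it reaches, in finitely many steps, a state $\phi(j)$ at which no successor exists, i.e.\ $G(\phi(j))=\emptyset$. Since we have established $G(x)\neq\emptyset$ for all $x\in\R^{\np}$, this obstruction never arises and the inductive selection above never gets stuck. I expect no substantive obstacle beyond making this observation precise; note also that exhibiting \emph{one} solution requires only a pointwise choice of $s_j$ at each step, so no regularity, continuity, or measurability of the selection $j\mapsto s_j$ is needed.
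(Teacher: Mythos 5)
Your proof is correct and rests on exactly the same observation as the paper's: since $\mathbf{S}$ (and hence the right-hand side of \eqref{eq:ternaryclosedloop}) is nonempty-valued everywhere, a solution can always be extended, so every initial condition admits a complete solution. The paper simply delegates the inductive construction you spell out to a citation of \cite[Proposition 2.10]{goebel2012hybrid}.
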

\begin{proof}
The proof follows simply from the fact that $\mathbf{S}$ is defined everywhere; see, e.g., \cite[Proposition 2.10]{goebel2012hybrid}.  
\end{proof}

The following notion of global exponential stability is used in the paper. 
\begin{definition}
We say the the origin is globally exponentially stable (\emph{GES}) for \eqref{eq:DiffInc} if there exists $\lambda, \kappa>0$ such that any maximal solution $\phi$ to \eqref{eq:DiffInc} satisfies, for all $j\in\dom\phi$,
$
\vert \phi(j)\vert\leq \kappa e^{-\lambda j}\vert \phi(0)\vert.
$
\hfill$\diamond$
\end{definition}

Next we provide sufficient conditions for GES
of the origin of  \eqref{eq:DiffInc}. Those conditions are formulated 
in terms of Lyapunov inequalities involving a set-valued Lyapunov function.
\begin{theorem}
\label{thm:Lyapunov}
Suppose that there exists $V\colon \R^n\rightrightarrows \R$, and positive real numbers $c_1, c_2, c_3$, and $p$ such that 
\begin{align}
\label{eq:Sandwhich}
&c_1\vert x\vert^p\leq \sup V(x)\leq c_2\vert x\vert^p, ~\forall x\in\R^n,\\
\label{eq:DeltaV}
&\sup V(g)-\sup V(x)\leq -c_3\vert x\vert^p, ~\forall x\in\R^n, g\in G(x).
\end{align}
Then, the origin is GES for  \eqref{eq:DiffInc}. \QEDB
\end{theorem}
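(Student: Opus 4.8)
The plan is to collapse the set-valued Lyapunov function to a scalar one and then run the classical discrete-time exponential-stability argument. First I would introduce the single-valued function $W\colon\R^n\to\R$ defined by $W(x)\coloneqq\sup V(x)$. The upper estimate in \eqref{eq:Sandwhich} shows that $V(x)$ is bounded above, while the lower estimate forces $V(x)$ to be nonempty, so that the supremum is a genuine real number at every $x$; hence $W$ is well defined and inherits $c_1|x|^p\le W(x)\le c_2|x|^p$. Rewriting \eqref{eq:DeltaV} in terms of $W$ gives $W(g)-W(x)\le -c_3|x|^p$ for all $x\in\R^n$ and all $g\in G(x)$. No set-valued machinery beyond ensuring that $W$ is well defined is needed from here on.

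Next I would convert the additive decrease into a multiplicative contraction. Using the upper bound $W(x)\le c_2|x|^p$, equivalently $|x|^p\ge W(x)/c_2$, the decrease inequality yields, with $\rho\coloneqq 1-c_3/c_2$,
\[
W(g)\le W(x)-c_3|x|^p\le \rho\,W(x).
\]
Two properties of $\rho$ must be verified. That $\rho<1$ is immediate from $c_2,c_3>0$. That $\rho\ge 0$ is the one point requiring an argument, and it is the step I expect to be the main (indeed only) obstacle: combining the nonnegativity $W(g)\ge c_1|g|^p\ge 0$ with the decrease gives $-W(x)\le W(g)-W(x)\le -c_3|x|^p$, hence $c_3|x|^p\le W(x)\le c_2|x|^p$ for all $x$, so $c_3\le c_2$ and therefore $\rho\in[0,1)$. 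I would derive this explicitly from the interplay between the lower sandwich bound and the decrease condition rather than postulate it.

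Finally I would propagate the contraction along an arbitrary maximal solution $\phi$. For every $j$ with $j+1\in\dom\phi$ we have $\phi(j+1)\in G(\phi(j))$, so $W(\phi(j+1))\le\rho\,W(\phi(j))$, and an induction on $j$ (the base case being trivial and domains of solutions being of the form $\{0,\dots,\overline J\}$) gives $W(\phi(j))\le\rho^{j}W(\phi(0))$ for all $j\in\dom\phi$. Feeding this back into the sandwich bounds,
\[
c_1|\phi(j)|^p\le W(\phi(j))\le\rho^{j}W(\phi(0))\le c_2\,\rho^{j}|\phi(0)|^p,
\]
whence $|\phi(j)|\le(c_2/c_1)^{1/p}\rho^{j/p}|\phi(0)|$. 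I would then set $\kappa\coloneqq(c_2/c_1)^{1/p}$, which satisfies $\kappa\ge 1$ since $c_1\le c_2$ (covering the case $j=0$), and $\lambda\coloneqq-\tfrac1p\ln\rho>0$ when $\rho\in(0,1)$, so that $\rho^{j/p}=e^{-\lambda j}$ and the required estimate $|\phi(j)|\le\kappa e^{-\lambda j}|\phi(0)|$ holds. The degenerate case $\rho=0$ (i.e.\ $c_3=c_2$) forces $W(\phi(j))=0$, hence $\phi(j)=0$, for every $j\ge 1$, for which the bound is trivially satisfied by any $\lambda>0$; this establishes GES of the origin.
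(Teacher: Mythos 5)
Your proof is correct and takes essentially the same approach as the paper: the paper's own proof simply defines $W(x)\coloneqq\sup V(x)$ and observes that $W$ is a standard single-valued Lyapunov function for \eqref{eq:DiffInc}. You have just written out explicitly the classical discrete-time contraction argument (including the verification that $\rho=1-c_3/c_2\in[0,1)$ and the degenerate case $\rho=0$) that the paper leaves implicit.
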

\begin{proof}
For all $x\in\R^n$, define $W(x)\coloneqq\sup V(x)$. The proof of the statement follows directly by observing that $W$ is a standard single-valued Lyapunov function for \eqref{eq:DiffInc}. \end{proof}

The conditions given in Theorem~\ref{thm:Lyapunov} are in general difficult to check. To overcome this drawback, we provide the following result that is easier to exploit.
\begin{proposition}
\label{cor:DeltaVuff}
Let $V\colon \R^n\rightrightarrows \R$, and $c_1, c_2, c_3$, and $p$ as in  Theorem~\ref{thm:Lyapunov}. We have the following:
\begin{itemize}
\item[$(i)$] if
\begin{equation}
\label{eq:suffSandwhich}
\begin{aligned}
&c_1\vert x\vert^p\leq \varrho\leq c_2\vert x\vert^p,&\forall x\in\R^n, \varrho\in\overline{\rg V(x)}
\end{aligned}
\end{equation}
then \eqref{eq:Sandwhich}  holds;
\item[$(ii)$] Let $x\in\R^n$ and $g\in G(x)$. If there exists $\omega\in V(x)$ such that: 
\begin{equation}
\label{eq:suffDeltaV}
\psi-\omega\leq -c_3\vert x\vert^p,\quad \forall \psi\in \overline{V(g)}
\end{equation}
then, \eqref{eq:DeltaV} holds.
\end{itemize}
\end{proposition}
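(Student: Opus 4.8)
The plan is to derive both items directly from elementary properties of suprema of subsets of $\R$, treating $\overline{\rg V(x)}$ as the closure of the value set $V(x)\subseteq\R$. Two facts do all the work: for any nonempty $\Omega\subseteq\R$ one has $\sup\Omega=\sup\overline{\Omega}$, and the supremum of a set bounded above lies in its closure. Because $V$ is a genuine set-valued Lyapunov candidate the value sets are nonempty, so these suprema are well behaved; beyond this the argument is pure bookkeeping with inequalities, and I would prove the two items independently.

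For item $(i)$ I would split the sandwich \eqref{eq:Sandwhich} into its two halves. For the upper bound, fix $x\in\R^n$; every $\varrho\in V(x)\subseteq\overline{\rg V(x)}$ satisfies $\varrho\le c_2\vert x\vert^p$ by \eqref{eq:suffSandwhich}, so $c_2\vert x\vert^p$ is an upper bound of $V(x)$ and hence $\sup V(x)\le c_2\vert x\vert^p$. For the lower bound, since $V(x)$ is nonempty and (by the previous step) bounded above, $\sup V(x)\in\overline{\rg V(x)}$; applying the left inequality in \eqref{eq:suffSandwhich} to $\varrho=\sup V(x)$ gives $c_1\vert x\vert^p\le\sup V(x)$. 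Combining the two halves yields \eqref{eq:Sandwhich}.

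For item $(ii)$ I would fix $x\in\R^n$ and $g\in G(x)$, take the $\omega\in V(x)$ furnished by the hypothesis, and chain three inequalities. From \eqref{eq:suffDeltaV} we have $\psi\le\omega-c_3\vert x\vert^p$ for every $\psi\in\overline{V(g)}$; passing to the supremum over $\overline{V(g)}$ and using $\sup V(g)=\sup\overline{V(g)}$ gives $\sup V(g)\le\omega-c_3\vert x\vert^p$. Since $\omega\in V(x)$ implies $\omega\le\sup V(x)$, I obtain $\sup V(g)\le\sup V(x)-c_3\vert x\vert^p$, which is exactly \eqref{eq:DeltaV}.

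There is no genuinely hard step here; the only point demanding care is the role of the closures appearing in \eqref{eq:suffSandwhich} and \eqref{eq:suffDeltaV}. They are precisely what allow the pointwise bounds to transfer to $\sup V(x)$ and $\sup V(g)$ even when these suprema are not attained within $V(x)$ or $V(g)$, so I would invoke $\sup\Omega=\sup\overline{\Omega}$ and the membership $\sup\Omega\in\overline{\Omega}$ explicitly rather than leaving them implicit. The nonemptiness of the value sets — automatic for the Lyapunov candidate, and for item $(ii)$ guaranteed by the assumed existence of $\omega\in V(x)$ — is the one hypothesis I would flag as needed for the suprema to be finite and for the argument to go through.
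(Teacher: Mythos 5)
Your proof is correct and follows essentially the same route as the paper's: both arguments use the boundedness of the value sets to transfer the pointwise bounds on $\overline{\rg V(x)}$ and $\overline{V(g)}$ to the suprema, the only cosmetic difference being that in item $(ii)$ the paper picks $\psi^\star\in\overline{V(g)}$ with $\sup V(g)=\psi^\star$ while you take suprema in the inequality and invoke $\sup\Omega=\sup\overline{\Omega}$, which is the same elementary fact. Your explicit flagging of nonemptiness of the value sets is a small point the paper leaves implicit, but it does not change the argument.
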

\begin{proof}
From \eqref{eq:suffSandwhich}, it follows that for all $x\in\R^{n}$, $V(x)$ is bounded. This implies that for all $x\in\R^{n}$, $\sup V(x)\in\overline{\rg V(x)}$. Hence, item $(i)$ is established. The proof of item $(ii)$ easily follows from the fact that since
$\sup V(g)-\sup V(x)\leq \sup V(g)-\omega$
and $V(g)$ is bounded, there exists $\psi^\star\in\overline{V(g)}$ such that $\sup V(g)=\psi^\star$.
\end{proof}
\section{Stability Analysis}
\label{sec:StabAnalysis}
We are now in a position to state the main result of this paper. This result provides sufficient conditions for global exponential stability of \eqref{eq:ternaryclosedloop} in the form of matrix inequalities. To this end, we use Theorem~\ref{thm:Lyapunov} and Proposition~\ref{cor:DeltaVuff}, and the following set-valued Lyapunov function candidate:
\begin{equation}
\label{eq:LyapVSet}
V(x)\coloneqq\bigcup_{s\in\mathbf{S}(Kx+d)}\left\{\begin{bmatrix}
x\\
s\\
\overline{\lambda}(Kx+d)
\end{bmatrix}^\tr P \begin{bmatrix}
x\\
s\\
\overline{\lambda}(Kx+d)
\end{bmatrix}\right\}
\end{equation}
A prototype of the function \eqref{eq:LyapVSet} for the scalar case with 
$$
P=\left[\begin{smallmatrix}
1 &0&0&0\\
\bullet&1&0 &0\\
\bullet&\bullet&0 &1\\
\bullet&\bullet&\bullet &0
\end{smallmatrix}\right], K=1, d=-1$$ 
is depicted in \figurename~\ref{fig:Lyap}.
\begin{figure}
\centering
\medskip

\psfrag{x}[1][1][1]{$x$}
\psfrag{y}[1][1][1]{$V(x)$}
\includegraphics[trim=0.3cm 0 1cm 0.4cm, clip, width=0.9\columnwidth]{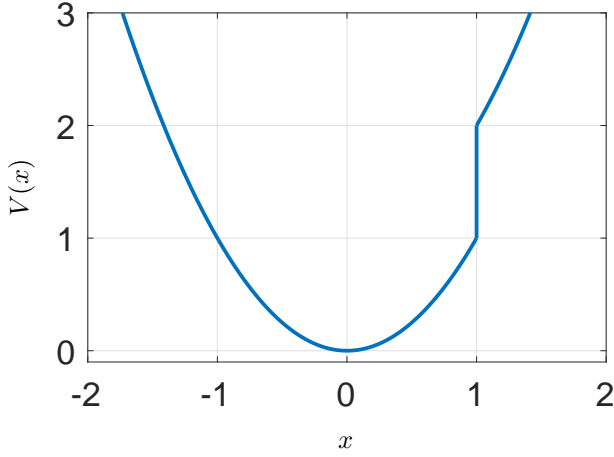}
\caption{Prototype of the Lyapunov function candidate used in Theorem~\ref{thm:StabCondLMIs}.\label{fig:Lyap}}
\end{figure}
\begin{theorem}
\label{thm:StabCondLMIs}
Suppose that there exist $P\in\mathbb{S}^{\np+3\nuu}$, $M_1, M_2, M_3\in\mathbb{P}^{8\nuu+1}$,
$\widehat{G}_i=(G_{i, 1}, G_{i, 2})\in\mathbb{D}^{2\nuu}\times \mathbb{D}^{2\nuu}$, $c_i>0$, with $i\in\{1, 2, 3\}$  such that
\begin{subequations}
\label{eq:StabCond}
\begin{equation}
\label{eq:SandV}
\begin{aligned}
&W^\tr_\perp(V_u+T^\tr \Psi(\widehat{G}_1)T+F^\tr M_1 F)W_\perp\preceq 0\\
&W^\tr_\perp(V_l+T^\tr \Psi(\widehat{G}_2) T+F^\tr M_2 F)W_\perp\preceq 0 
\end{aligned}
\end{equation}
\begin{equation}
W^\tr_\perp(\Xi+T^\tr \Psi(\widehat{G}_3)T+F^\tr M_3 F)W_\perp\preceq 0
\label{eq:DeltaV_fins}
\end{equation}
\end{subequations}
where for all $i\in\{1, 2, 3\}$:
\begin{subequations}
\begin{equation}
\label{eq:dataMainTh}
\begin{aligned}
%&\Psi_i(G, Q)\coloneqq 
%\He\left(\bigoplus_{j=1}^2
%\begin{bmatrix}
%0_{\nuu\times 2\nuu}&G_{j,i}&0_{\nuu\times \nuu}&0_{\nuu\times \nuu}\\Xi
%0_{\nuu\times2\nuu}&0_{\nuu\times \nuu}&Q_{j,i}&0_{\nuu\times\nuu}\\
%0_{3\nuu\times2\nuu}&0_{3\nuu\times \nuu}&0_{3\nuu\times \nuu}&0_{3\nuu\times\nuu}
%\end{bmatrix}\right)\\
%&\Omega(G_i, Q_i)\coloneqq \Psi(G_{i})+\Upsilon(Q_{i})\\
&\Psi(\widehat{G}_i)\coloneqq\He\left(\bigoplus_{j=1}^2
\begin{bmatrix}
0_{2\nuu, 2\nuu}&G_{i, j}&0_{2\nuu, \nuu}\\
0_{3\nuu, 2\nuu}&0_{3\nuu, 2\nuu}&0_{3\nuu,\nuu}
\end{bmatrix}\right)\\
%&\Upsilon(Q_{i})\coloneqq\He\left(\bigoplus_{j=1}^2
%\begin{bmatrix}
%0_{\nuu, \nuu}&Q_{i, j}&0_{\nuu, \nuu}\\
%0_{4\nuu, \nuu}&0_{4\nuu,\nuu}&0_{4\nuu,\nuu}
%\end{bmatrix}\right)\\
&\Xi\coloneqq \begin{bmatrix}\Pi_1^\tr \mathscr{V}_+^\tr&\Pi_2^\tr\end{bmatrix}\left(\begin{bmatrix}
1&0\\
0&-1
\end{bmatrix}\otimes P\right)
\begin{bmatrix}\mathscr{V}_+\Pi_1\\\Pi_2\end{bmatrix}+c_3 X\\
&V_l\coloneqq -\Pi_2^\tr P\Pi_2+c_1 X, V_u\coloneqq\Pi_2^\tr P\Pi_2-c_2 X,
\end{aligned}
\end{equation}
where $\Pi_1$, $\mathscr{V}_+$, and $F$ are defined in \eqref{eq:Pi1},
\begin{figure*}
\begin{equation}
\label{eq:Pi1}
\begin{aligned}
&\mathscr{V}_+\coloneqq\begin{bmatrix}
A&B\Delta&0_{\np,\nuu}&0_{\np,2\nuu}&0_{\np,2\nuu}\\
0_{\nuu,\np}&0_{\nuu,\nuu}&I_{\nuu}&0_{\nuu,2\nuu}&0_{\nuu,2\nuu}\\
0_{2\nuu,\np}&0_{2\nuu,\nuu}&0_{2\nuu,\nuu}&0_{2\nuu,2\nuu}&I_{2\nuu}
\end{bmatrix}, &F=\begin{bmatrix}
0_{4\nuu,\np}&\begin{bmatrix}I_{4\nuu}&0_{4\nuu,\nuu}\end{bmatrix} &0_{4\nuu,5\nuu}&0_{4\nuu,1}\\
0_{4\nuu,\np}&0_{4\nuu,5\nuu}&\begin{bmatrix}I_{4\nuu}&0_{4\nuu,\nuu}\end{bmatrix}&0_{4\nuu,1}\\
0_{1,\np}&0_{1,5\nuu}&0_{1, 5\nuu}&1
\end{bmatrix}\\
&\Pi_1\coloneqq \begin{bmatrix}
I_{\np}&0_{\np, 5\nuu}&0_{\np, 5\nuu+1}\\
0_{\nuu,\np}&H&0_{\nuu, 5\nuu+1 }\\
0_{\nuu,\np}&0_{\nuu,5\nuu}&\begin{bmatrix}H&0_{\nuu, 1}\end{bmatrix}\\
0_{2\nuu,\np}&\begin{bmatrix} I_{2\nuu}&0_{2\nuu, 3\nuu}\end{bmatrix}&0_{2\nuu, 5\nuu+1}\\
0_{2\nuu,\np}&0_{2\nuu, 5\nuu}&\begin{bmatrix} I_{2\nuu}&0_{2\nuu, 3\nuu+1}\end{bmatrix}
\end{bmatrix}
\end{aligned}
\end{equation}
\end{figure*}
\begin{equation}
\begin{aligned}
&H\coloneqq\begin{bmatrix}
0_{\nuu,2\nuu}&I_{\nuu}&0_{\nuu,2\nuu}
\end{bmatrix}, X\coloneqq I_{\np}\oplus 0_{10\nuu+1, 10\nuu+1}\\
&\Pi_2\coloneqq\begin{bmatrix}
I_{\np+\nuu}&0_{\np+\nuu,5\nuu}\\
0_{2\nuu,(\np+\nuu)}&\begin{bmatrix} 0_{2\nuu,\nuu}& I_{2\nuu}&0_{2\nuu,2\nuu}\end{bmatrix}
\end{bmatrix}\Pi_1\\
&W\coloneqq\begin{bmatrix}
R\\
(I_2\otimes L)T
\end{bmatrix}, T\coloneqq\begin{bmatrix}
0_{10\nuu,\np}&I_{10\nuu}&0_{10\nuu,1}
\end{bmatrix}\\
&R\coloneqq\begin{bmatrix}
K&-Z&0_{\nuu, 5\nuu}&d\\
KA&KB\Delta J&-Z&d\\
0_{\nuu, n_p}&E&0_{\nuu, n_p}&-\1\\
0_{\nuu, n_p}&0_{\nuu, n_p}&E&-\1
\end{bmatrix}\\
&E\coloneqq\begin{bmatrix}
0_{\nuu, 2\nuu}&I_{\nuu}&I_{\nuu}&0_{\nuu,\nuu}
\end{bmatrix}\\ 
&Z\coloneqq\begin{bmatrix}
0_{\nuu, 4\nuu}&I_{\nuu}
\end{bmatrix}, J\coloneqq\begin{bmatrix}
0_{\nuu, 2\nuu}&I_{\nuu}&0_{\nuu, 2\nuu}
\end{bmatrix},
\end{aligned}
\end{equation}
\end{subequations}
and $L$ is defined in \eqref{eq:ConstraintsMatrices}. Then, the origin of \eqref{eq:ternaryclosedloop} is GES.
\end{theorem}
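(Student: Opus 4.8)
The plan is to instantiate Theorem~\ref{thm:Lyapunov} with $p=2$ and the set-valued candidate $V$ in \eqref{eq:LyapVSet}, reducing both the sandwich bound \eqref{eq:Sandwhich} and the decrease \eqref{eq:DeltaV} to quadratic inequalities that are then dispatched by Proposition~\ref{cor:DeltaVuff} and a Finsler/S-procedure elimination. The first step is to introduce the augmented vector
$$\eta\coloneqq\bigl(x,\ \chi(Kx+d,s),\ \chi(Kg+d,s^+),\ 1\bigr)\in\R^{\np+10\nuu+1},$$
where $g=Ax+B\Delta s$ is a successor, $s\in\mathbf{S}(Kx+d)$, $s^+\in\mathbf{S}(Kg+d)$, and $\chi(\cdot,\cdot)\in\R^{5\nuu}$ is the KKT vector of Theorem~\ref{thm:stepKKT} (with entries $\lambda_1,\lambda_2,s,\1_{\nuu}-s,u$). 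Thus $\eta$ carries two copies of the KKT data: the current layer for $u=Kx+d$ and the successor layer for $u^+=Kg+d$. The matrix $T$ selects the $10\nuu$ KKT coordinates, and one checks directly that $W\eta=0$ is exactly the conjunction of all the linear relations at play: the stationarity equalities \eqref{eq:KKT_Fa} for both layers (the rows $(I_2\otimes L)T$), together with $u=Kx+d$, $u^+=K(Ax+B\Delta s)+d=Kg+d$, and the identities forcing the fourth KKT coordinate of each layer to equal $\1_{\nuu}$ minus its third (the rows of $R$, via $Z$, $J$, $E$).

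Second, I would verify that the two Lyapunov slices are read off correctly: $\Pi_2\eta=(x,s,\overline\lambda(Kx+d))$ reproduces the vector inside \eqref{eq:LyapVSet} at the current state, while $\mathscr{V}_+\Pi_1\eta=(g,s^+,\overline\lambda(Kg+d))$ reproduces it at the successor $g$. Here Lemma~\ref{lemm:ExplLag} is essential, since on the feasible set the multipliers of both layers are pinned to $\overline\lambda$. Writing $\sigma(x,s)$ for the scalar value $(x,s,\overline\lambda(Kx+d))^\tr P(x,s,\overline\lambda(Kx+d))$ inside \eqref{eq:LyapVSet}, and using $\eta^\tr X\eta=|x|^2$, one obtains $\eta^\tr V_u\eta=\sigma(x,s)-c_2|x|^2$, $\eta^\tr V_l\eta=c_1|x|^2-\sigma(x,s)$, and $\eta^\tr\Xi\eta=\sigma(g,s^+)-\sigma(x,s)+c_3|x|^2$. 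Hence the hypotheses of Proposition~\ref{cor:DeltaVuff} amount to showing $\eta^\tr V_u\eta\le0$, $\eta^\tr V_l\eta\le0$, and $\eta^\tr\Xi\eta\le0$ for every feasible $\eta$, the matching of $s$ in $g$ and in $\sigma(x,s)$ providing the single $\omega\in V(x)$ required by item~$(ii)$.

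Third comes the constraint relaxation. The remaining KKT data are of two kinds. The complementarity equalities \eqref{eq:KKT_Fb}–\eqref{eq:KKT_Fc} make $\eta^\tr T^\tr\Psi(\widehat{G}_i)T\eta=0$ on the feasible set for every diagonal $\widehat{G}_i$: the off-diagonal block $G_{i,j}$ pairs $(\lambda_1,\lambda_2)$ with $(s,\1_{\nuu}-s)$, so that $\He(\cdot)$ produces exactly the sums $\lambda_1\!\odot\! s$ and $\lambda_2\!\odot\!(\1_{\nuu}-s)$, both zero. The sign conditions \eqref{eq:KKT_Fd}–\eqref{eq:KKT_Fg} make $F\eta\ge0$ componentwise; since $M_i\in\mathbb{P}^{8\nuu+1}$ has nonnegative entries, $(F\eta)^\tr M_i(F\eta)\ge0$ (a copositive, not merely positive-semidefinite, use of the multiplier). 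Now apply Finsler: each inequality in \eqref{eq:StabCond}, being of the form $W^\tr_\perp(\,\cdot\,)W_\perp\preceq0$, is equivalent to $\eta^\tr(\,\cdot\,)\eta\le0$ for all $\eta\in\ker W$. Evaluating at a feasible $\eta$, the $\Psi$-term vanishes and the $M_i$-term is nonnegative, so from \eqref{eq:DeltaV_fins} one gets $\eta^\tr\Xi\eta\le-(F\eta)^\tr M_3(F\eta)\le0$, and likewise for the two bounds in \eqref{eq:SandV}. Since every admissible pair $(x,s)$, resp.\ triple $(x,s,s^+)$, extends to a feasible $\eta$—$\mathbf{S}$ being nonempty-valued by \eqref{eq:stepSet}—the pointwise inequalities \eqref{eq:suffSandwhich} and \eqref{eq:suffDeltaV} hold, Proposition~\ref{cor:DeltaVuff} yields \eqref{eq:Sandwhich}–\eqref{eq:DeltaV}, and Theorem~\ref{thm:Lyapunov} delivers GES of the origin of \eqref{eq:ternaryclosedloop}.

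The main obstacle I anticipate is not the S-procedure logic but the algebraic bookkeeping of the structured matrices: one must confirm that $R$, $\mathscr{V}_+$, $\Pi_1$, $\Pi_2$, together with $H,E,Z,J$, genuinely encode the successor map $g=Ax+B\Delta s$ and the two-layer KKT system, and—crucially—that the successor layer's multipliers are forced to $\overline\lambda(Kg+d)$ so that $\mathscr{V}_+\Pi_1\eta$ is really the value of \eqref{eq:LyapVSet} at $g$. Getting the block partitions right and keeping distinct the role of $F$ (nonnegative selector, relaxed copositively) from that of $T$ (KKT selector, feeding the sign-free complementarity multipliers) is where the delicacy of the proof resides.
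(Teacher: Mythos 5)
Your proposal is correct and takes essentially the same route as the paper's proof: your augmented vector $\eta$ is exactly the paper's $\theta$, and the reduction through Proposition~\ref{cor:DeltaVuff}, the vanishing of the $\Psi(\widehat{G}_i)$-terms via complementarity, the copositive use of $M_i$ on $-F\theta\leq 0$, and the Finsler elimination over $\ker W$ reproduce the paper's argument step for step. The algebraic bookkeeping you flag as the main risk (that $\Pi_2\theta$, $\mathscr{V}_+\Pi_1\theta$, and $R\theta=0$ encode the two KKT layers and the successor map) is precisely what the paper verifies ``by construction,'' so there is no gap.
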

\begin{proof}
The proof hinges upon Theorem~\ref{thm:Lyapunov} and Proposition~\ref{cor:DeltaVuff}. In particular, let, for all $x\in\R^{\np}$, $V$ be defined as in \eqref{eq:LyapVSet}. 
We show that the satisfaction of \eqref{eq:StabCond}  implies all the conditions in  Proposition~\ref{cor:DeltaVuff}.
Pick $x\in\R^{n_p}$ and $g\coloneqq Ax+B\Delta s\in Ax+B\Delta\mathbf{S}(Kx+d)$. Let 
$$
\omega=\begin{bmatrix}
x\\
s\\
\overline{\lambda}(Kx+d)
\end{bmatrix}^\tr P \begin{bmatrix}
x\\
s\\
\overline{\lambda}(Kx+d)
\end{bmatrix}
$$
and observe that $\omega\in V(x)$.  Pick any $\psi\in V(g)$. In particular, $\psi$ writes as
$$
\psi=h(s_\psi)^\tr P \underbrace{\begin{bmatrix}
Ax+B\Delta s\\
s_\psi\\
\overline{\lambda}(K(Ax+B\Delta s)+d)
\end{bmatrix}}_{h(s_\psi)}
$$
for some $s_\psi\in \mathbf{S}(K(Ax+B\Delta s)+d)$.

\underline{\textit{Preliminary steps}}. Define
$$
\theta\coloneqq\left(
x,
\chi(Kx+d, s),
\chi(K(Ax+B\Delta s)+d, s_\psi\right),
1)\in\R^{n_\theta}
$$
with $n_\theta\coloneqq \np+10\nuu+1$. Then, by construction, one has 
\begin{subequations}
\begin{equation}
\label{eq:omegax}
(x, s, \overline{\lambda}(Kx+d))=\Pi_2\theta
\end{equation}
\begin{equation}
\label{eq:omegaxg}
(x, s,
s_\psi, \overline{\lambda}(Kx+d), \overline{\lambda}(K(Ax+B\Delta s)+d)
)=\Pi_1\theta
\end{equation}
\end{subequations}
and  $h(s_\psi)=\mathscr{V}_+\Pi_1\theta$. In particular
\begin{equation}
\label{eq:DeltaV_quad}
\psi-\omega+c_3\vert x\vert^2=\theta^\tr\Xi\theta.
\end{equation}
Now observe that from the definition of $\theta$ and the general structure of the vector $\chi$, the following holds:
\begin{equation}
\label{eq:R}
R\theta=0.
\end{equation}
Moreover, using the constraints provided by Theorem~\ref{thm:stepKKT}, it follows that:
\begin{subequations}
\begin{align}
\label{eq:LTFins}
&(I_2\otimes L)T\theta=0\\
\label{eq:LTQuadEq}
&\theta^\tr T^\tr\Psi(\widehat{G}_i) T\theta=0\quad \forall i\in\{1,2,3\}\\
\label{eq:LTQuadIn}
&\theta^\tr F^\tr M_i F\theta\geq 0\qquad \forall i\in\{1,2,3\},
\end{align}
\end{subequations}
where \eqref{eq:LTQuadIn} comes from the nonnegativity constraints in Theorem~\ref{thm:stepKKT} that ensure $-F\theta\leq 0$. In particular, combining \eqref{eq:R} and \eqref{eq:LTFins} yields:
\begin{equation}
\label{eq:FinalFinsler}
W\theta=0.
\end{equation}
\underline{\textit{Proof of \eqref{eq:SandV}$\implies$\eqref{eq:suffSandwhich}.}} 
Bearing in mind \eqref{eq:omegax}, the satisfaction of \eqref{eq:suffSandwhich} is equivalent to
\begin{equation}
\label{eq:VuVl}
\begin{aligned}
&\theta^\tr V_u\theta\leq 0, &\theta^\tr V_l\theta\leq 0.
\end{aligned}
\end{equation}
Using \eqref{eq:LTQuadEq} and \eqref{eq:LTQuadIn}
\begin{equation}
\label{eq:VuVlBound}
\begin{array}{ccc}
&\theta^\tr V_u\theta\leq \theta^\tr (V_u+T^\tr\Psi(\widehat{G}_1) T+F^\tr M_1 F)\theta\\
&\theta^\tr V_l\theta\leq \theta^\tr (V_l+T^\tr\Psi(\widehat{G}_2) T+F^\tr M_2 F)\theta
\end{array}
\end{equation}
Therefore, combining \eqref{eq:VuVlBound} and \eqref{eq:FinalFinsler}, to show the satisfaction of \eqref{eq:VuVl} is enough to show that the following implication holds:
 \begin{equation}
\label{eq:Impli}
W\theta=0\implies\left\{\begin{aligned}
&\theta^\tr (V_u+T^\tr\Psi(\widehat{G}_1)T+F^\tr M_1 F)\theta\leq0\\
&\theta^\tr (V_l+T^\tr\Psi(\widehat{G}_2)T+F^\tr M_2 F)\theta\leq 0
\end{aligned}\right.
\end{equation}
The latter is equivalent to \eqref{eq:SandV}. Hence, \eqref{eq:SandV} implies \eqref{eq:suffDeltaV}.

\underline{\textit{Proof of \eqref{eq:DeltaV_fins}$\implies$\eqref{eq:suffDeltaV}.}} 
Recalling \eqref{eq:DeltaV_quad}, \eqref{eq:suffDeltaV} holds if $\theta^\tr\Xi\theta\leq0$.
Using \eqref{eq:LTQuadEq} and \eqref{eq:LTQuadIn}
\begin{equation}
\label{eq:OmegaMBound}
\begin{aligned}
\theta^\tr \Xi\theta\leq\theta^\tr (\Xi+T^\tr\Psi(\widehat{G}_3)T+F^\tr M_3 F)\theta
\end{aligned}
\end{equation}
Therefore, by recalling \eqref{eq:FinalFinsler}, the satisfaction of \eqref{eq:suffDeltaV} 
follows from the following implication
$
W\theta =0\implies  \theta^\tr (\Xi+T^\tr\Psi(\widehat{G}_3)T+F^\tr M_3 F)\theta
\leq 0,$
which in turn is equivalent to \eqref{eq:DeltaV_fins}. Namely, \eqref{eq:DeltaV_fins} implies \eqref{eq:suffDeltaV} and this concludes the proof.
\end{proof}
\begin{remark}
\label{rem:PositiveP}
The satisfaction of \eqref{eq:SandV}, which in turn ensures that \eqref{eq:suffSandwhich} holds for the function \eqref{eq:LyapVSet}, does not imply that the matrix $P$ is positive definite. This consideration clearly emerges in the numerical examples presented in Section~\ref{sec:NumEx}.
\end{remark}
\section{Numerical Examples}
\label{sec:NumEx}
In this section, we showcase the proposed methodology in two numerical examples. Specifically, we consider the following dynamical system\footnote{Numerical solutions to LMIs are obtained in YALMIP \cite{yalmip} via SeDuMi \cite{sturm1999using}.} 
\begin{equation}
\label{eq:DT_plant}
x^+=Ax+B^\prime\varphi(x)
\end{equation}
where
$A=\left[\begin{array}{cc} 0.9464 & 0.0957\\ -0.9568 & 0.9033 \end{array}\right]$
and $B^\prime$ and $\varphi\colon\R^{2}\rightarrow\R$ are selected in each of the examples below.  

\begin{example}[Ternary Control]
\label{ex:Ex1}
In this first example, we pick  $B^\prime=\begin{bmatrix}0.0049 &0.0959 \end{bmatrix}^\tr$, $K^\prime=\left[\begin{array}{cc} 9.9 &0.495
\end{array}\right]$, and analyze the case of ternary control systems; see, e.g., \cite{de2013robust,valmorbida2020quantization}. More specifically, we select
$\varphi(x)=\tau(K^\prime x)$,  where: $\tau(u)=1$ if $u>1$, $\tau(u)=0$ if $u\in[-1, 1]$, and $\tau(u)=-1$ if $u<-1$. It is worth to observe that no common quadratic function exists to certify exponential stability of the matrices $A$ and $A+B^\prime K^\prime$. This prevents from using a quadratic Lyapunov function to certify global exponential stability in this example.
The proposed methodology instead enables to certify GES. System \eqref{eq:DT_plant} can be rewritten as \eqref{eq:ternaryclosedloop_single} by taking
$
B=\begin{bmatrix}
B^\prime&-B^\prime
\end{bmatrix}$, $K=\begin{bmatrix}
K^\prime\\-K^\prime
\end{bmatrix}, d=-\1_2$, and $\Delta=I_2$. By solving the conditions in \eqref{eq:StabCond}, we obtain:
$$
P=\scalebox{0.87}{$\left[\begin{smallmatrix}
150 & 7.1 & 0.57 & 30 & -0.031 & 6.5\times 10^{-3} & 0.012 & 0\\ 
\bullet & 16 & 0.054 & 1.5 & -1.6\times 10^{-3} & 3.2\times 10^{-4} & 5.8\times 10^{-4} & 0\\ 
\bullet & \bullet  & -0.053 & 0.053 & 1.8\times 10^{-4} & 1.8\times 10^{-4} & -1.6 & -1.8\times 10^{-4}\\ 
\bullet  & \bullet  & \bullet  & -0.053 & 1.6 & -6.4\times 10^{-4} & -1.6 & -6.4\times 10^{-4}\\ 
\bullet  & \bullet  & \bullet  & \bullet  & -2.3\times 10^{-3} & 1\times 10^{-3} & 1.5\times 10^{-3} & 0\\ 
\bullet &\bullet & \bullet  &\bullet & \bullet  & 2.3\times 10^{-4} & 1.4\times 10^{-4} & 1.4\times 10^{-4}\\ 
\bullet  &\bullet  &\bullet  & \bullet  & \bullet  & \bullet  & -6.5\times 10^{-4} & -1.2\times 10^{-3}\\ 
\bullet & \bullet  & \bullet  & \bullet & \bullet  &\bullet  & \bullet & -5.1\times 10^{-4} 
\end{smallmatrix}
\right]$}.
$$
Note that, as mentioned in Remark~\ref{rem:PositiveP}, in this example the matrix $P$ is not positive definite. 
\figurename~\ref{fig:LyapLivEx1} depicts level sets of the corresponding function along with a trajectory of the system. The evolution of the values of $W$ is also presented for the same trajectory.
\begin{figure}
\centering
\psfrag{x1}[1][1][1]{$x_1$}
\psfrag{x2}[1][1][1]{$x_2$}
\psfrag{x}[1][1][1]{$j$}
\psfrag{y}[1][1][0.8]{$W(\phi(j))$}
\includegraphics[trim=0cm 0cm 1cm 0.3cm, clip, width=0.95\columnwidth]{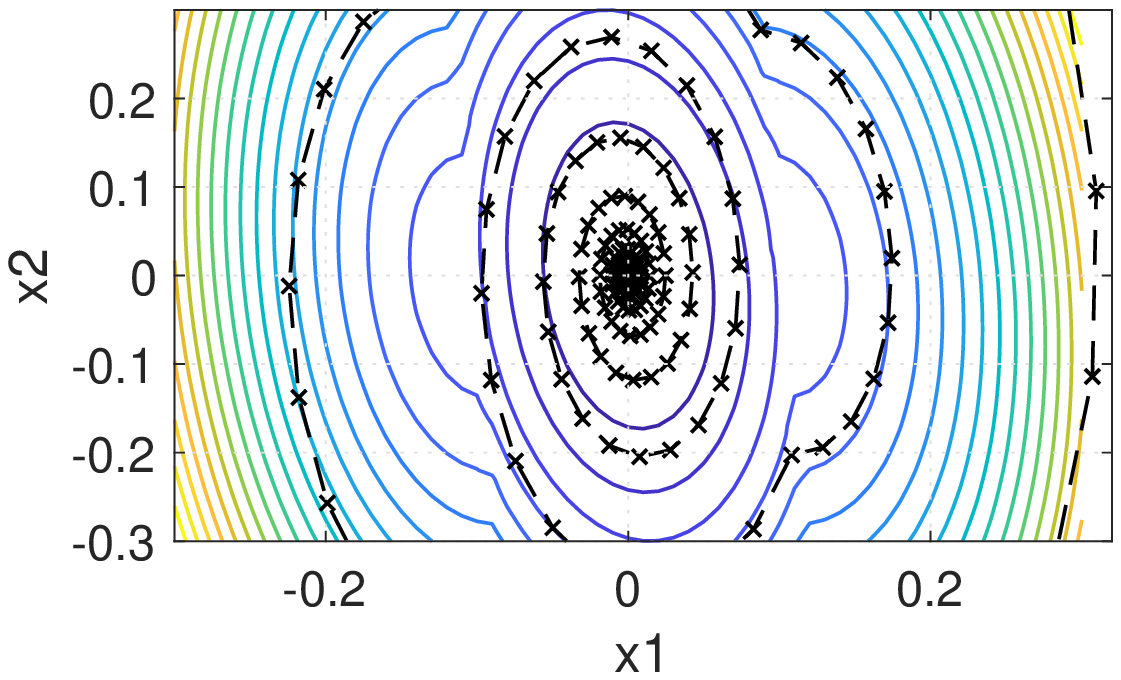}
\includegraphics[trim=0cm 0cm 0.7cm 0.1cm, clip, width=0.95\columnwidth]{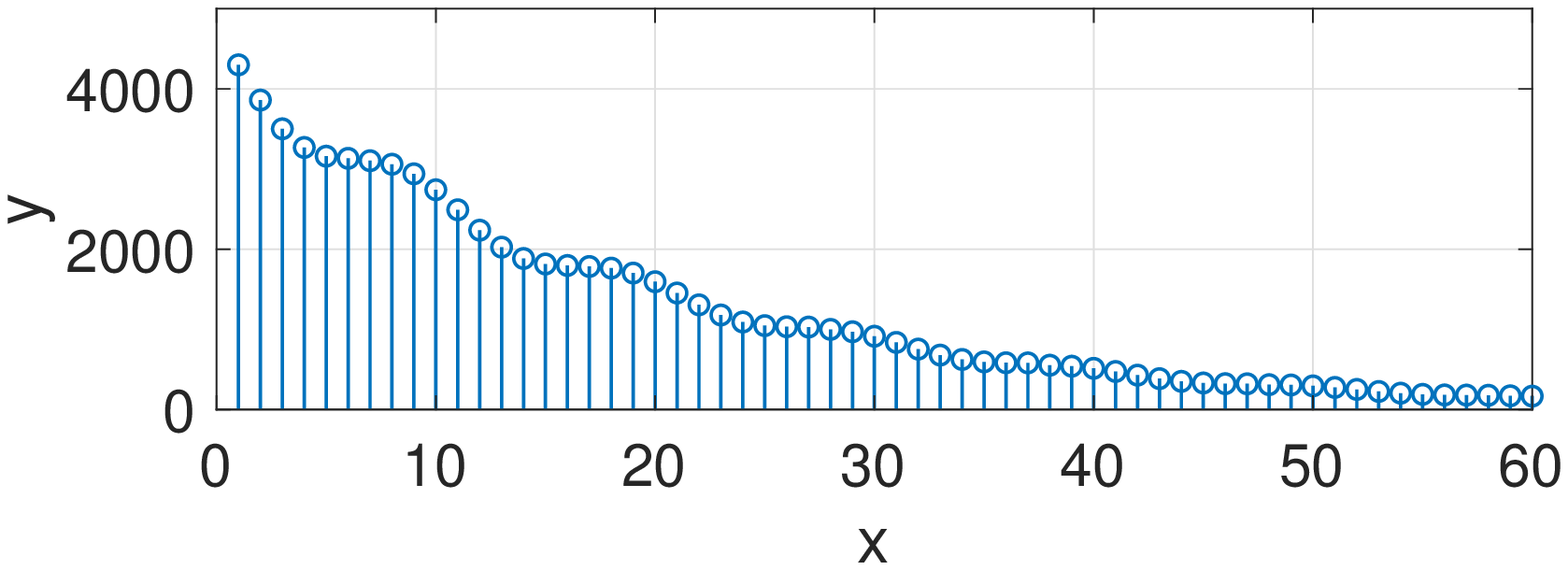}
\caption{Simulations in Example~\ref{ex:Ex1}. Top: Level sets of the function $W(x)=\sup V(x)$ and the trajectory $\phi$ of \eqref{eq:DT_plant} starting from $(5, 5)$ (dashed-crossed line). Bottom: Evolution of $j\mapsto W(\phi(j))$.
\label{fig:LyapLivEx1}}
\end{figure}
\end{example}
\begin{example}[Binary Control]
\label{ex:Ex2}
In this second example, we take $B^\prime$ and $K^\prime$ as in Example~\ref{ex:Ex1} and $\varphi(x)=s(K^\prime x-1)$. Solving the conditions in \eqref{eq:StabCond} in this case yields:
$$
P=\left[\begin{smallmatrix} 
140& 6.9 & 0.69 & -8.1\times 10^{-5} & 0.23\\ 
\bullet & 16.0 & 0.063 & 8.6\times 10^{-5} & 0.011\\ 
\bullet& \bullet & -0.066 & 9.4\times 10^{-3} & -1.5\\ 
\bullet & \bullet& \bullet & 0 & 0.023\\ 
\bullet &\bullet & \bullet & \bullet & -0.046
 \end{smallmatrix}\right].
$$
\figurename~\ref{fig:LyapLivEx2} reports the level sets of  the function $W$ along with the solution to \eqref{eq:DT_plant} starting from $(0.3, 0.3)$. The picture clearly shows that the lack of symmetry of the nonlinearity $s$ reflects on the function $W$.
\begin{figure}
\centering
\psfrag{x1}[1][1][1]{$x_1$}
\psfrag{x2}[1][1][1]{$x_2$}
\includegraphics[trim=0cm 0cm 1cm 0cm, clip, width=0.95\columnwidth]{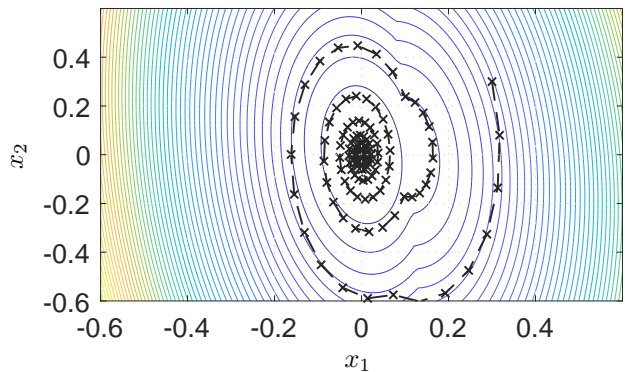}
\caption{Level sets of the function $W(x)=\sup V(x)$ in Example~\ref{ex:Ex2} and the trajectory $\phi$ of \eqref{eq:DT_plant} starting from $(0.3, 0.3)$ (dashed-crossed line).
\label{fig:LyapLivEx2}}
\end{figure}
\end{example}
%%%%%%%%%%%%%%%%%%%%%%%%%%%%%%%%%%%%%%%%%%%%%%%
\section{Conclusion}
The stability analysis of a class of discontinuous discrete-time control systems has been addressed. The proposed  approach relies on a characterization of the set-valued step mapping based on quadratic/linear constraints. Thanks to this characterization, we proposed a generalized quadratic set-valued Lyapunov function. Sufficient conditions in the form of LMIs have been provided to certify global exponential stability for the considered class of nonlinear control systems. The effectiveness of the methodology has been illustrated in two numerical examples, which have highlighted the potential of our approach in systematically generating generalized quadratic Lyapunov functions. 

The main thrust of our work is that it is unclear whether, for discrete-time systems, the use of nonquadratic Lyapunov functions for sector-bounded non-slope-restricted nonlinearities provides any advantage (for global stability). The approach  we propose has the major advantage to use a nonquadratic Lyapunov function to analyze systems without any slope restriction. Future directions of research include the extension to continuous-time control systems, as well as to regional stability analysis of systems with other discontinuous-nonlinearities. 
%\balance
\bibliography{biblioQuant}  
\end{document}